\documentclass[conference]{IEEEtran}
\IEEEoverridecommandlockouts

\usepackage{geometry}
\usepackage{cite}
\usepackage{amsmath,amssymb,amsfonts}
\usepackage{algorithmic}
\usepackage{graphicx}
\usepackage{textcomp}
\usepackage{xcolor}
\usepackage{amsthm}
\newtheorem{theorem}{Theorem}

\newtheorem{lemma}{Lemma}
\newtheorem{corollary}{Corollary}
\newtheorem{definition}{Definition}
\newtheorem{remark}{Remark}

\def\BibTeX{{\rm B\kern-.05em{\sc i\kern-.025em b}\kern-.08em
		T\kern-.1667em\lower.7ex\hbox{E}\kern-.125emX}}
\begin{document}
	
	\title{About Optimal Prefix Codes over Countably Infinite Alphabets: Probabilistic Intervals for the Codeword Lengths Assignment\\
    \thanks{This work was supported by the National Natural Science Foundation of China under Grant 62401599.(Correspondence author: Wei Yan)}
    }
	
	\author{\IEEEauthorblockN{Hongyang Liu\IEEEauthorrefmark{1}\IEEEauthorrefmark{2}and Wei Yan\IEEEauthorrefmark{1}\IEEEauthorrefmark{2}}
		\IEEEauthorblockA{\IEEEauthorrefmark{1}College of Electronic Engineering, National University of Defense Technology, Hefei, China\\
			\IEEEauthorrefmark{2}Anhui Province Key Laboratory of Cyberspace Security Situation Awareness and Evaluation, Hefei, China\\
			liuhongyangl@nudt.edu.cn, yan.wei2023@nudt.edu.cn}
	}
	
	\maketitle
	\begin{abstract}
		For the discrete memoryless sources with a countably infinite alphabet, we prove that for any positive integer $k$, there exists a corresponding probability interval such that if the largest symbol probability $p_{1}$ falls in this interval, the optimal code length for the symbol equals $k$. Furthermore, for infinite sources, we provide a criterion to determine probability distributions whose optimal code length assignment follows the pattern $l^{best}_{i}=i$, for $i\ge 1$. Compared with the existing conclusion for anti-uniform sources, the proposed criterion requires less information for verification.
	\end{abstract}
	
	\begin{IEEEkeywords}
		Huffman code, countably infinite alphabet, optimal prefix code, probability interval.
	\end{IEEEkeywords}
	
\section{Introduction and Motivation}
Considering the discrete memoryless source $\mathcal{S}=(\mathcal{X},\mathcal{P})$, where the alphabet $\mathcal{X}=\{1,2,\ldots\}$ is countable and probability distribution $\mathcal{P}=(p_{1},p_{2},\ldots)$ satisfies $\sum_{i=1}^{\infty}p(i)=1$ and
\begin{equation}\label{eq1}
p(i)\geq p(i+1)\geq 0, \quad\quad i\in\mathcal{X}.
\end{equation}
If the alphabet is finite, the famous Huffman coding algorithm provides an optimal prefix code\cite{Huffman:1952}. However, the bottom-up construction of the Huffman algorithm prevents its direct application to countably infinite alphabets. In particular, for probability distributions with certain properties, optimal prefix codes for infinite sources can be inferred by generalizing the Huffman algorithm's properties.

Perhaps the earliest code of this type is the optimal code for the sources with geometric distribution, developed by Golomb\cite{Golomb:1966} and Gallager \textit{et al.}\cite{Gallager:1975}. Subsequently, in later work by Humblet\cite{Humblet:1978} on the optimal code for Poisson distribution, it was shown that for any distribution whose tail decays faster than $(\frac{\sqrt{5}-1}{2})^i\approx0.618^{i}$, the coding problem can be reduced to that of Huffman coding on a finite source. Later, Abrahams\cite{Abrahams:1994} extended the results and presented that
for some positive integers $M$,
there exists an integer $m$ such that the set of inequalities
\begin{equation}\label{eq2}
	 \sum_{i=1}^{\infty}p_{k+iM+1}\leq  p_{k} \leq \sum_{i=1}^{\infty}p_{k+iM-1},\quad k=m,m+1,\ldots
\end{equation}
holds, then the optimal code can be obtained via Huffman codes of suitably reduced sources,
where the source is the finite alphabet constructed by grouping and merging the tail probabilities of the original countably infinite symbol set. Further, Akiko Kato\cite{Kato:1996} proposed new sufficient conditions under which, for a broader class of probability distributions, an optimal $D$-ary prefix code can be constructed via the Huffman algorithm.

In a remarkable paper, Linder \textit{et al.}\cite{Linder:1997} introduced the notion of \textit{convergence} for code sequences,
and proved that for the source $\mathcal{S}$ with a countably infinite alphabet $\mathcal{X}$, Huffman codes for its truncated version $\mathcal{S}_{n}$ with the alphabet $\mathcal{X}_{n}=\{1,2,...,n\}$ contain a subsequence converging to the optimal prefix code for $\mathcal{S}$. Moreover, this optimal code satisfies the equality condition of the Kraft inequality. This result establishes a pathway for extending Huffman-based methods to study optimal codes for infinite sources\cite{Linder:1997,Chow:1998}.
	
In fact, Golin\cite{Golin:2003} took a different approach by mapping the code tree of an infinite source onto a path in an infinite weighted graph from the algorithmic structure perspective. He demonstrated that the optimal infinite tree exhibits a repetitive structure described as a finite head followed by a finite periodic segment, thereby transforming the optimal coding problem into a search for the shortest cyclic path under certain constraints. This observation leads us to examine the problem from the perspective of the tree structure.
	
Another class of discrete memoryless sources was introduced by Esmaeili \textit{et al}.\cite{Esmaeili:2005,Esmaeili:2007},
a source $\mathcal{S}_n$ with a finite alphabet $\mathcal{X}_{n}=\{1,2,\ldots,n\}$ is called \textit{anti-uniform} if $\mathcal{S}_n$ has a Huffman code, its codeword length satisfies $l_{1}=1,l_{2}=2,\ldots,l_{n-1}=l_{n}=n-1$.
This notion was then extended to countably infinite source, whose optimal code length is $\{l_{i}=i\}_{i=1}^{\infty}$.
The sufficient condition\footnote{Reference~\cite{Esmaeili:2007} claims that~\eqref{eq3} is a necessary and sufficient condition for an infinite source to be anti-uniform, but no relevant proof is provided.} for an infinite source to be anti-uniform has been established as follows\cite{Humblet:1978}.
\begin{equation}\label{eq3}
		p_{i+2}+p_{i+3}+\cdots=\displaystyle\sum_{k=i+2}^{\infty}p_{k}\le p_{i}, \quad i\in\mathcal{X}.
\end{equation}

However, the optimal prefix coding for a source with a countably infinite alphabet remains an open problem.
In this paper, we address two aspects of optimal code for the infinite alphabet. First, by analyzing the structure of Huffman trees in the finite alphabet setting, we prove that for an infinite source, the optimal code length of the symbol with the largest probability is uniquely determined whenever the probability falls in certain specified intervals. Second, we introduce a new criterion for identifying anti-uniform sources.
The main contributions of this study are summarized below.
	\begin{enumerate}
	\item  We investigate the optimal codes for countably infinite sources and obtain the probability intervals for which the optimal codeword length corresponding to the maximum probability equals $k$.
	\item We propose a recursive criterion for identifying anti-uniform sources, which requires less information for verification.
	\end{enumerate}

\section{Preliminary}
In this section, we first define the necessary notations for source coding, then review the existence conclusions for optimal prefix codes of infinite sources, and finally discuss an essential property of optimal prefix codes.
\subsection{Notations}
Let $\mathcal{S}\triangleq(\mathcal{X},\mathcal{P})$ denote a discrete memoryless source, where $\mathcal{X}=\{1,2,\ldots\}$ denotes a countable alphabet, and $\mathcal{P}=(p_{1},p_{2},\ldots)$ represents the corresponding probability distribution
satisfying $\sum_{i=1}^{\infty}p(i)=1$ and~\eqref{eq1}.
If the alphabet is countably infinite, the source is referred to as an infinite source.
Let $C_{best}$ denote its optimal prefix code, and the corresponding codeword length assignment is $\mathcal{L}_{best}=\{l^{best}_{i}\}_{i=1}^{\infty}$.

If the alphabet is finite with size $n$, then the source is called a finite source, denoted by $\mathcal{S}_{n}=(\mathcal{X}_{n},\mathcal{P}_{n})$, where $\mathcal{X}_{n}=\{1,2,...,n\}$ and $\mathcal{P}=(p_{1},p_{2},...,p_{n})$. Correspondingly, Huffman code for $\mathcal{S}_{n}$ is denoted by $C_{n}$, and the associated codeword length assignment is $\mathcal{L}_{n}=\{l_{i,n}\}_{i=1}^{n}$. If the discussion of the alphabet size is not involved, the Huffman code lengths are simply denoted as $\mathcal{L}_{n}=\{l_{i}\}_{i=1}^{n}$.

\subsection{Convergence and Properties of the Code Scheme}
\begin{definition}\cite{Linder:1997}: \label{definition1}
    A sequence of codes $\{C_{n}\}_{n=1}^{\infty}$ converges to a code $C$, denoted by
\begin{equation}
	C_{n}\rightarrow C,\,  n\rightarrow\infty,\label{eq5}
\end{equation}
	if and only if for any $i>0$, there exists an integer $N$ such that for any $n>N$, the $i$-th codeword of $C_{n}$ is eventually constant to the $i$-th codeword of $C$.
   \end{definition}
	Based on Definition~\ref{definition1}, the following theorem is established.
	\begin{theorem}\cite{Linder:1997}: \label{theorem1}
		For the infinite source $\mathcal{S}$, its truncated version of size $n$ is proposed as $\mathcal{S}_{n}=(\mathcal{X}_{n},\mathcal{P}_{n})$, where $\mathcal{P}_{n}=(q_{1},q_{2},...,q_{n})$,
		\begin{equation}
			q_{i}\triangleq p_{i}/S_{n},\, S_{n}\triangleq\sum_{j=1}^{n}p_{j}.\label{eq6}
		\end{equation}
		The sequence $\{C_{n}\}_{n=1}^{\infty}$ contains a subsequence that converges to the optimal prefix code $C_{best}$.
	\end{theorem}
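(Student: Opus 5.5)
Theorem~\ref{theorem1} is the Linder et al.\ convergence result, so I will argue via compactness of codeword-length sequences and lower semicontinuity of the expected length. The argument has three steps.

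\emph{Extracting a convergent subsequence.} For each $n$, $C_n$ is a Huffman code for $\mathcal{S}_n$. First I will bound each codeword length uniformly in $n$. For fixed $i$ with $p_i>0$, the Huffman property gives $q_i \le 2^{-l_{i,n}+1}$ times something bounded; more concretely, a symbol of probability $q$ in an optimal code has length at most about $\log_{\phi}(1/q)+1$, where $\phi$ is the golden ratio. Since $q_i\ge p_i$, this yields $l_{i,n}\le L_i$ independent of $n$. There are then only finitely many binary codewords of length at most $L_i$. A diagonal argument over $i=1,2,\ldots$ produces a subsequence $n_k$ along which the $i$-th codeword is eventually constant for every $i$; call the limit $c_i$. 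Symbols with $p_i=0$ can be handled separately, since they do not affect the expected length.

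\emph{Checking that the limit is a prefix code and comparing costs.} The limit $C=\{c_i\}$ is prefix-free. Indeed, any two indices $i,j$ are eventually fixed together in some $C_{n_k}$, which is prefix-free. For any prefix code $C'$ of $\mathcal{S}$ with finite expected length $\bar L'=\sum p_i l'_i$, I will compare $C$ with the truncation of $C'$ to $\{1,\ldots,n\}$, which is a prefix code for $\mathcal{S}_n$. Optimality of Huffman codes gives $\sum_{i\le n} q_i l_{i,n}\le \sum_{i\le n} q_i l'_i\le \bar L'/S_n$. Fix $m$; for $n=n_k$ large,
\begin{equation*}
\sum_{i\le m} p_i l(c_i)=\sum_{i\le m} p_i l_{i,n_k}\le S_{n_k}\sum_{i\le n_k} q_i l_{i,n_k}\le \bar L'.
\end{equation*}
Letting $m\to\infty$ gives $\sum_i p_i l(c_i)\le \bar L'$, so $C$ is optimal. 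Such a $C'$ exists whenever the optimum is finite; if the entropy is infinite, every code has infinite cost and the statement is vacuous or interpreted accordingly.

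\emph{Main obstacle.} The hard part is the uniform bound $l_{i,n}\le L_i$ on Huffman codeword lengths, without which compactness fails. I would first try the classical result that in a binary Huffman tree a leaf of depth $d$ has probability at most about $1/F_{d}$ (Fibonacci), which gives $d=O(\log(1/q_i))$. An alternative avoids this entirely: allow lengths to escape to infinity, and note that any $i$ with $l_{i,n_k}\to\infty$ and $p_i>0$ would force the left-hand side above to blow up, contradicting the bound $\bar L'$. Hence the lengths are bounded along the sequence, and the diagonal argument goes through.
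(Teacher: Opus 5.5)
The paper gives no proof of this statement; it quotes it from Linder, Tarokh and Zeger. Your proposal is a correct, self-contained proof along the standard lines of that result. It has four parts: a length bound for each fixed symbol, uniform in $n$; diagonal extraction over the finitely many binary strings of bounded length; preservation of the prefix property in the limit; and the truncation comparison $\sum_{i\le m}p_i\,l(c_i)\le\sum_{i\le n_k}p_i\,l_{i,n_k}\le\bar L'$, which gives optimality as $m\to\infty$.

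A few corrections to details.

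\emph{The main obstacle is already solved.} Your own comparison settles it: for every $n\ge i$,
$p_i\,l_{i,n}\le\sum_{j\le n}p_j\,l_{j,n}\le\sum_{j\le n}p_j\,l'_j\le\bar L'$, so $l_{i,n}\le\bar L'/p_i$ directly. There is no need to first let lengths escape along a subsequence. Taking $C'$ to be a Shannon code makes this available whenever $H(\mathcal{P})<\infty$. That is the hypothesis of the cited theorem, which the statement here leaves implicit.

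\emph{The first phrasing of the Fibonacci route is false.} You wrote $q_i\le 2^{-l_{i,n}+1}$ times a bounded factor. Truncations with $q_i\propto\phi^{-i}$ satisfy \eqref{eq8} strictly, so they are anti-uniform with $l_{i,n}=i$ for $i<n$. There $q_i$ decays only like $\phi^{-l_{i,n}}$, not like $2^{-l_{i,n}}$. Your corrected form $l\le 1+\log_\phi(1/q)$ is right and holds for every optimal tree by the sibling-exchange argument. It is also the only one of your two bounds that still gives compactness when $H(\mathcal{P})=\infty$, since then no finite $\bar L'$ exists.

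\emph{Zero-probability symbols cannot simply be set aside.} Suppose $p_i=0$ for $i>M$. The standardized Huffman codes of Definition~\ref{definition3} pair the zero leaves first-in-first-out. This puts every such symbol at depth at least $\lfloor\log_2(n-M)\rfloor$, so their codewords have no convergent subsequence. The statement therefore needs $p_i>0$ for all $i$, rather than a separate treatment of zeros.
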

	Following the review of convergence results for the code scheme, we note that Huffman coding can generate optimal prefix codes, and the optimal prefix coding problem possesses both the greedy-choice property and the optimal-substructure property\cite{Cormen:2022}. The optimal-substructure property is introduced below.
	\begin{theorem}\cite{Cormen:2022}: \label{theorem2}
		Consider a finite source $\mathcal{S}_{n}=(\mathcal{X}_{n},\mathcal{P}_{n})$. Let $x,y\in \mathcal{X}_{n}$ be two symbols with the smallest probabilities. Define a modified source $\mathcal{S}'=(\mathcal{X}',\mathcal{P}')$, where
		\begin{equation}
			\mathcal{X}'=\mathcal{X}_{n}-\{x,y\}\cup \{z\}\label{eq7}
		\end{equation}
		and $\mathcal{P}'$ is identical to $\mathcal{P}_{n}$ except that $p'_{z}=p_{x}+p_{y}$. Suppose $C'$ is an optimal prefix code for $\mathcal{S}'$. Then, replacing the leaf node for $z$ in the code tree of $C'$ with an internal node that has $x,y$ as children can obtain an optimal prefix code $C_{n}$ for the source $\mathcal{S}_{n}$.
	\end{theorem}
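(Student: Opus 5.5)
This is the standard optimal-substructure argument: reduce to the smaller source, transfer optimality through the change in expected length, and argue by contradiction.

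\textbf{Setup.} Let $C'$ be an optimal prefix code for $\mathcal{S}'$ with lengths $l'$. Let $C_n$ be the code obtained by splitting the leaf $z$ into two children $x,y$. Then $l_x=l_y=l'_z+1$, and every other length is unchanged. This gives the key identity
\begin{equation*}
L(C_n)=\sum_{i\in\mathcal{X}_n}p_i l_i = L(C') - p'_z l'_z + (p_x+p_y)(l'_z+1) = L(C') + p_x + p_y .
\end{equation*}
Since $C_n$ is a valid prefix code for $\mathcal{S}_n$, it remains only to show that no prefix code for $\mathcal{S}_n$ has smaller expected length.

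\textbf{Contradiction argument.} Suppose some prefix code $D$ for $\mathcal{S}_n$ satisfies $L(D)<L(C_n)$. We would like a code $D$ in which $x$ and $y$ are sibling leaves at maximum depth, and we obtain one from the greedy-choice property.

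Take an optimal code tree for $\mathcal{S}_n$, so its expected length is at most $L(D)$. In an optimal tree every internal node has two children, since otherwise the tree could be contracted. Hence a deepest leaf has a sibling leaf. Swap $x$ and $y$ into these two deepest sibling positions. Because $p_x,p_y$ are the smallest probabilities, each exchange changes the cost by $(p_a-p_x)(l_x-l_a)\le 0$, so the expected length does not increase.

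The result, still called $D$, has $x,y$ as siblings with $L(D)<L(C_n)$. Merge them into their parent, labelled $z$, to obtain a code $D'$ for $\mathcal{S}'$. By the same computation,
\[
L(D')=L(D)-p_x-p_y<L(C_n)-p_x-p_y=L(C').
\]
This contradicts the optimality of $C'$, so $C_n$ is optimal for $\mathcal{S}_n$.

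\textbf{Main obstacle.} The only delicate step is the exchange argument that produces an optimal tree with $x$ and $y$ as deepest siblings. It needs both the fullness of optimal trees and care when $x$ or $y$ already occupies one of the target positions. I would handle this by swapping one symbol at a time, checking at each swap that the cost is non-increasing.
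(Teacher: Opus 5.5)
Your proposal is correct and is essentially the standard proof from Cormen et al., which the paper cites rather than reproving: the identity $L(C_n)=L(C')+p_x+p_y$, the greedy-choice exchange that places $x,y$ as sibling leaves at maximum depth, and merging them back to contradict the optimality of $C'$. One claim needs tightening: the paper allows zero probabilities, so an optimal tree need not be full, but contracting its single-child nodes gives a full tree of no greater cost, and that is all your exchange step needs.
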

	Furthermore, a specific class of Huffman codes with distinctive characteristics has received particular research attention, known as anti-uniform sources\cite{Esmaeili:2005,Esmaeili:2007}. The following condition is provided.
\begin{theorem}\cite{Esmaeili:2005}: \label{theorem3}
A finite source admits an anti-uniform Huffman code if and only if the following condition
	\begin{equation}
			p_{i+2}+p_{i+3}+\cdots+p_{n}\le p_{i},\quad 1\le i\le n-3.\label{eq8}
	\end{equation}
holds.
\end{theorem}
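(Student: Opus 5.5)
Theorem~\ref{theorem3} is a two-way statement, so I will prove sufficiency and necessity separately. Both directions use induction on $n$ together with the optimal-substructure property of Theorem~\ref{theorem2}. Write $T_i=p_{i+2}+\cdots+p_n$ for the tail in \eqref{eq8}. The anti-uniform length profile $(1,2,\ldots,n-2,n-1,n-1)$ is the same as saying the Huffman tree is a ``caterpillar'': every merge combines the current composite node with the next symbol $p_j$. The base cases $n\le 3$ are trivial, since \eqref{eq8} is then vacuous and every Huffman tree on at most three leaves has lengths $(1)$, $(1,1)$ or $(1,2,2)$.

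\emph{Sufficiency.} Assume \eqref{eq8}. I will run the Huffman algorithm and show by induction on the step that, after merging $p_{n-1}$ and $p_n$ and then successively absorbing $p_{n-2},\ldots,p_{j+1}$, the current composite node has weight $W_j=p_{j+1}+\cdots+p_n$. The remaining symbols are $p_1\ge\cdots\ge p_j$. To take one more caterpillar step, the two smallest weights must be $W_j$ and $p_j$. Since $p_j\le p_{j-1}\le\cdots$, it suffices that $W_j\le p_{j-1}$. This is exactly $T_{j-1}\le p_{j-1}$, which is \eqref{eq8} with $i=j-1$, and it is needed for $j-1\le n-3$, i.e.\ for every step before the last two nodes remain. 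Ties can always be broken in favour of this choice, because any Huffman tie-break still yields a Huffman code. By Theorem~\ref{theorem2}, the resulting tree is optimal with lengths $l_i=i$ for $i\le n-1$ and $l_n=n-1$.

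\emph{Necessity.} Suppose some Huffman code has the anti-uniform lengths. The main obstacle here is that Huffman ties allow several trees, so the claim is not simply that the greedy run fails. I will argue through the merge structure. In any Huffman tree with lengths $(1,2,\ldots,n-1,n-1)$, the internal node at depth $i$ has subtree weight $T_{i-1}+p_{i}=p_i+\cdots+p_n$, and its sibling is leaf $i$. By the sibling property of Huffman trees (Gallager), the nodes listed in nonincreasing weight order pair up as siblings. Siblings are adjacent in this order, so the node at depth $i+1$, whose weight is $T_i$, must be ranked no higher than leaf $i$. In particular $T_i\le p_i$ for $i\le n-3$. I expect to spend most of the care on this step: verifying that the sibling ordering forces $T_i\le p_i$ rather than the weaker comparison $T_i\le p_{i-1}$. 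The equality cases are exactly where ties permit both orderings.

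An alternative route to the inequality is an exchange argument. If $T_i>p_i$, swap leaf $i$, at depth $i$, with the subtree rooted at depth $i+1$, whose weight is $T_i$. This changes the expected length by $(p_i-T_i)\cdot 1<0$, contradicting optimality. This version is cleaner and avoids the sibling property entirely, so I would use it as the main argument.
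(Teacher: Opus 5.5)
The paper does not prove this statement; it is imported from Esmaeili's work and used only in the sufficiency direction, in the proof of Theorem~\ref{theorem6}. There is no proof to compare against, and your argument is a correct self-contained proof.

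\emph{Sufficiency.} At the stage with composite $W_j=p_{j+1}+\cdots+p_n$ and leaves $p_1,\dots,p_j$, both $W_j$ and $p_j$ are at most every other remaining weight exactly when $W_j\le p_{j-1}$. That is \eqref{eq8} with $i=j-1$, so this direction is right.

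\emph{Necessity.} The exchange argument is also right, and it is the better of your two routes. It needs no sibling-property bookkeeping, and it proves more than asked: \eqref{eq8} is necessary for the anti-uniform assignment to be optimal among \emph{all} prefix codes. Together with sufficiency, ``some Huffman code is anti-uniform'' and ``the anti-uniform lengths are optimal'' are therefore equivalent. You should state explicitly why the lengths $(1,2,\dots,n-1,n-1)$ force the caterpillar:
\begin{itemize}
\item the Kraft sum is exactly $1$, so the tree is full;
\item with a single leaf at each depth $1,\dots,n-2$, the other node at depth $i$ is internal with leaf set $\{i+1,\dots,n\}$;
\item for $i\le n-3$, the node of weight $T_i$ is the internal node at depth $i+1\le n-2$, and it is not an ancestor of leaf $i$, so the swap is legal.
\end{itemize}

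Two smaller points.

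First, your sibling paragraph is off by one. The internal node at depth $i$ has weight $T_{i-1}=p_{i+1}+\cdots+p_n$. The node of weight $p_i+\cdots+p_n$ is the \emph{parent} of leaf $i$, at depth $i-1$. Your exchange step uses the correct weight $T_i$ at depth $i+1$, so nothing downstream breaks.

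Second, your tie-breaking clause is essential, not cosmetic. Take $(\tfrac13,\tfrac13,\tfrac16,\tfrac16)$, where \eqref{eq8} holds with equality. The standardized rule of Definition~\ref{definition3} places the merged node $\tfrac13$ ahead of the two equal leaves, then merges symbols $1$ and $2$, returning lengths $(2,2,2,2)$. So ``admits'' must be read as ``for some tie-breaking.'' Under strict inequalities, as in Theorem~\ref{theorem6}, your argument shows that every Huffman run merges the composite node at each step after the first, and hence produces the caterpillar shape.
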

\section{Properties of the Standardized Huffman Algorithm}
In this section, we first define the standardized Huffman algorithm with a unique merging rule. Then, by studying the structural properties of Huffman code trees, we present three lemmas regarding the codeword length of the symbol with the maximum probability.

The bottom-up merging operation of the Huffman algorithm, which repeatedly combines the two nodes with the smallest probabilities, can yield different results for the same source. To ensure the uniqueness of the coding result and thereby facilitate subsequent analysis and proofs, we define a standardized merging rule.

Given a finite source $\mathcal{S}_{n}=(\mathcal{X}_{n},\mathcal{P}_{n})$, let
\begin{equation}\label{eq9}
	\mathcal{P}_{n}^{(m)}\triangleq(p_{1}^{(m)},p_{2}^{(m)},\ldots,p_{n-m}^{(m)})
\end{equation}
denote the probability distribution after $m$ merge operations,
where $p_{1}^{(m)}\ge p_{2}^{(m)}\ge\cdots\ge p_{n-m}^{(m)}$ and $m=0,1,\ldots,n-1$.
In particular, $\mathcal{P}_{n}^{(0)}=\mathcal{P}_{n}$.
The standardized Huffman merging rules are defined as follows.

\begin{definition}\label{definition3}
For any $m\ge0$, let $p_{0}^{(m)}\triangleq\alpha>1$.
During the merging step applied to $\mathcal{P}_{n}^{(m)}$, the two nodes $p_{n-m}^{(m)},p_{n-m-1}^{(m)}$ are selected.
There exists an integer $k$ that satisfies the following inequality.
\begin{equation}
			p_{k-1}^{(m)}>p_{n-m-1}^{(m)}+p_{n-m}^{(m)}\ge p_{k}^{(m)}.\label{eq10}
\end{equation}
The probability distribution $$\mathcal{P}_{n}^{(m+1)}=(p_{1}^{(m+1)},p_{2}^{(m+1)},\ldots,p_{n-m-1}^{(m+1)})$$
obtained after this merger is given by the following equation.
	\begin{equation*}
			p_{i}^{(m+1)}=\left\{
			\begin{aligned}
				&p_{i}^{(m)},&& 0\le i \le k-1\\
				&p_{n-m}^{(m)}+p_{n-m-1}^{(m)},&& i=k \\
				&p_{i-1}^{(m)},&& k+1\le i\le n-m-1
			\end{aligned}.
			\right.
	\end{equation*}
	\end{definition}
In practice, after standardizing the merging procedure of the Huffman algorithm, we pay particular attention to a specific moment in the merging process, defined as follows.
\begin{definition}\label{definition4}
For a finite source $\mathcal{S}_{n}=(\mathcal{X}_{n},\mathcal{P}_{n})$, where $\mathcal{P}_{n}$ satisfies
\begin{equation}\label{eq12}
			 p_{1}\ge p_{2}\ge\cdots\ge p_{n}.
\end{equation}
	If $p_{1}<\frac{1}{2}$,\footnote{If $p_{1}\ge \frac{1}{2}$, $l_{1}$ is necessarily 1. This is regarded as a trivial case and thus excluded from further consideration.} then let $\delta$ be the smallest integer such that $\mathcal{P}_{n}^{(\delta-1)},\mathcal{P}_{n}^{(\delta)}$ satisfy
		\begin{equation}
			p_{n-\delta+1}^{(\delta-1)}+p_{n-\delta}^{(\delta-1)}<p_{1},\label{eq13}
		\end{equation}
		\begin{equation}
			p_{n-\delta}^{(\delta)}+p_{n-\delta-1}^{(\delta)}\ge p_{1}.\label{eq14}
		\end{equation}
		This moment is then defined as the $\delta$-occasion. In particular, when $p_{n-1}+p_{n}\ge p_{1}$, we define $\delta=0$.
	\end{definition}
By investigating the $\delta$-occasion, we gain insight into some favorable properties of the standardized Huffman code tree, which allows us to prove key lemmas regarding the value of $l_{1}$. This is shown in the following lemmas.
\begin{lemma}\label{lemma1}
Consider a finite source $\mathcal{S}_{n}=(\mathcal{X}_{n},\mathcal{P}_{n})$, where $\mathcal{P}_{n}$ satisfies \eqref{eq12} and $p_{1}<\frac{1}{2}$. On the $\delta$-occasion of the standardized Huffman coding procedure.
	\begin{enumerate}
		\item If $p_{1}<b$, then $\delta<n-\frac{1}{b}$; \vspace{2pt}
		\item if $p_{1}>a$, then $\delta>n-\frac{2}{a}+1$.
	\end{enumerate}
	\end{lemma}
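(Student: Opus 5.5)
The plan is to read both bounds off the configuration $\mathcal{P}_{n}^{(\delta)}$ at the $\delta$-occasion. It has exactly $n-\delta$ nodes and total mass $1$. Part 1 will come from an upper bound of $p_1$ on every node, and part 2 from a lower bound on the nodes obtained by grouping them into pairs.

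\textbf{Structure at the $\delta$-occasion.} Under the standardized rule of Definition~\ref{definition3}, the sums of merged pairs are nondecreasing along the Huffman procedure. This is the standard fact that each new merged node is at least as large as the previous merged one. By \eqref{eq13}, the last merge before the $\delta$-occasion produced a node of weight $<p_{1}$, so every merged node created so far is $<p_{1}$. Hence $p_{1}$ has not yet been merged and is still the largest entry: $p_{1}^{(\delta)}=p_{1}\ge p_{i}^{(\delta)}$ for all $i$. By \eqref{eq14}, the two smallest entries of $\mathcal{P}_{n}^{(\delta)}$ sum to at least $p_{1}$. Since the entries are sorted, every disjoint pair of entries of $\mathcal{P}_{n}^{(\delta)}$ also sums to at least $p_{1}$. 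The case $\delta=0$ fits this description directly: there $\mathcal{P}_n^{(0)}=\mathcal{P}_n$ and the condition $p_{n-1}+p_n\ge p_1$ is the same pairing condition.

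\textbf{Part 1.} Every one of the $n-\delta$ nodes is at most $p_{1}$, so $1=\sum_i p_i^{(\delta)}\le (n-\delta)p_{1}<(n-\delta)b$. This gives $n-\delta>1/b$, i.e.\ $\delta<n-\frac1b$.

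\textbf{Part 2.} I need a lower bound of the form $1\ge \frac{n-\delta+1}{2}\,p_{1}$. I will set the node $p_{1}$ aside and group the remaining $n-\delta-1$ nodes into disjoint pairs, each of which contributes at least $p_{1}$.
\begin{itemize}
\item If $n-\delta=2m+1$ is odd, the remaining nodes form $m$ pairs. Then $1\ge p_{1}+mp_{1}=\frac{n-\delta+1}{2}p_{1}$.
\item If $n-\delta=2m$ is even, I use $m-1$ pairs and one leftover node. The leftover node is at least the second-smallest entry, which is at least half the sum of the two smallest, hence at least $p_{1}/2$. Then $1\ge p_{1}+(m-1)p_{1}+\frac{p_{1}}2=\frac{n-\delta+1}{2}p_{1}$.
\end{itemize}
With $p_{1}>a$ this yields $n-\delta+1<2/a$, i.e.\ $\delta>n-\frac2a+1$.

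The main point requiring care is the structural claim that $p_{1}$ survives unmerged up to the $\delta$-occasion and dominates every node there. This rests on the monotonicity of merged sums under the standardized rule, together with the minimality of $\delta$. In the even case, the half-node estimate for the leftover node is exactly what produces the $+1$ in part 2.
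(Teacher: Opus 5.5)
Your proof is correct and is essentially the paper's argument: part 1 uses the same bound $1\le (n-\delta)p_{1}$, and your pairing in part 2 regroups the paper's estimate $p_{1}+(n-\delta-3)\frac{p_{1}}{2}+p_{1}=\frac{n-\delta+1}{2}p_{1}$ (each middle node $\ge p_{1}/2$, the last two summing to $\ge p_{1}$), argued directly rather than by contradiction. You also justify that $p_{1}$ is still unmerged and maximal at the $\delta$-occasion, a step the paper uses without comment. One detail to add: in the even case you must choose the leftover to be a node other than the smallest, and this is always possible because $(n-\delta)p_{1}\ge 1$ with $p_{1}<\frac{1}{2}$ forces $n-\delta\ge 3$, hence $m\ge 2$.
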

\begin{proof}The lemma is proved by contradiction.
\begin{enumerate}	
	\item If $p_{1}<b$ and $\delta\ge n-\frac{1}{b}$, then for the probability distribution $\mathcal{P}_{n}^{(\delta)}$, the inequality
			\begin{equation*}
              \begin{aligned}
				\sum_{j=1}^{n-\delta}p_{j}^{(\delta)}& \le (n-\delta)p_{1}^{(\delta)}  \\
                                        & =(n-\delta)p_{1}   \\
                                         & <\frac{1}{b}\times b=1  \\
              \end{aligned}
			\end{equation*}
		holds. This leads to a contradiction.
	\item If $p_{1}>a$ and $\delta\le n-\frac{2}{a}+1$, then for the  probability distribution $\mathcal{P}_{n}^{(\delta)}$, from \eqref{eq14}, it follows that for any $t\le n-\delta-1$, the inequality
		\begin{equation*}
			2p_{t}^{(\delta)}\ge p_{n-\delta-1}^{(\delta)}+p_{n-\delta}^{(\delta)}\ge p_{1}
		\end{equation*}
holds. Consequently, we obtain the following inequality.
	\begin{equation*}
    \begin{aligned}
				\sum_{j=1}^{n-\delta}p_{j}^{(\delta)}& \ge p_{1}+(n-\delta-3)\frac{p_{1}}{2}+p_{1}  \\
                                      & = \frac{n-\delta+1}{2}p_{1}   \\
                                      & >\frac{1}{a}\times a=1,  \\
   \end{aligned}
	\end{equation*}
which yields a contradiction. This completes the proof.
	\end{enumerate}
	\end{proof}
	Building on Lemma~\ref{lemma1}, $l_{1}$ can be determined on the $\delta$-occasion of the standardized Huffman coding procedure.
	\begin{figure}[!t]
		\centering
		\includegraphics[width=0.4\textwidth]{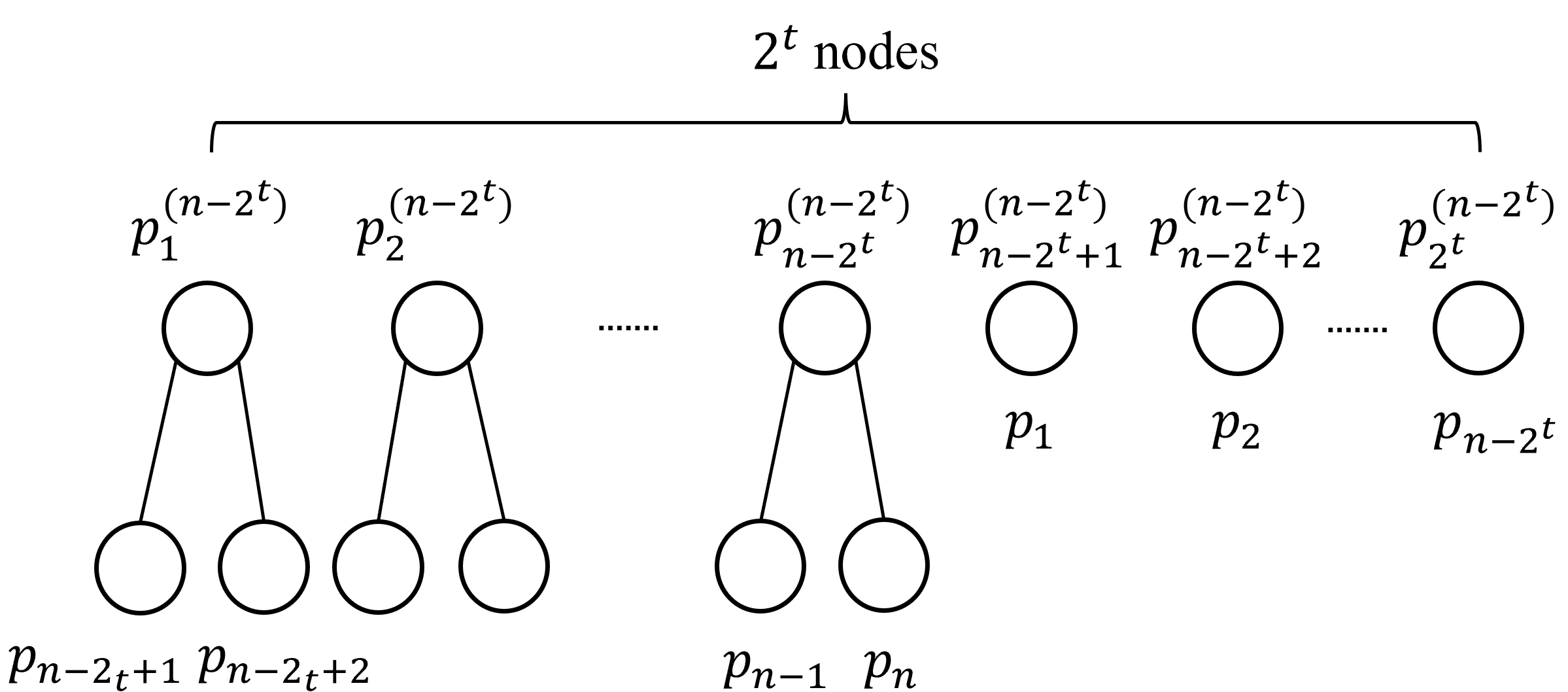}
		\caption{The change of probability distribution during the first $n-2^{t}$ merges.}
		\label{phase1}
	\end{figure}
	\begin{figure}[!t]
		\centering
		\includegraphics[width=0.4\textwidth]{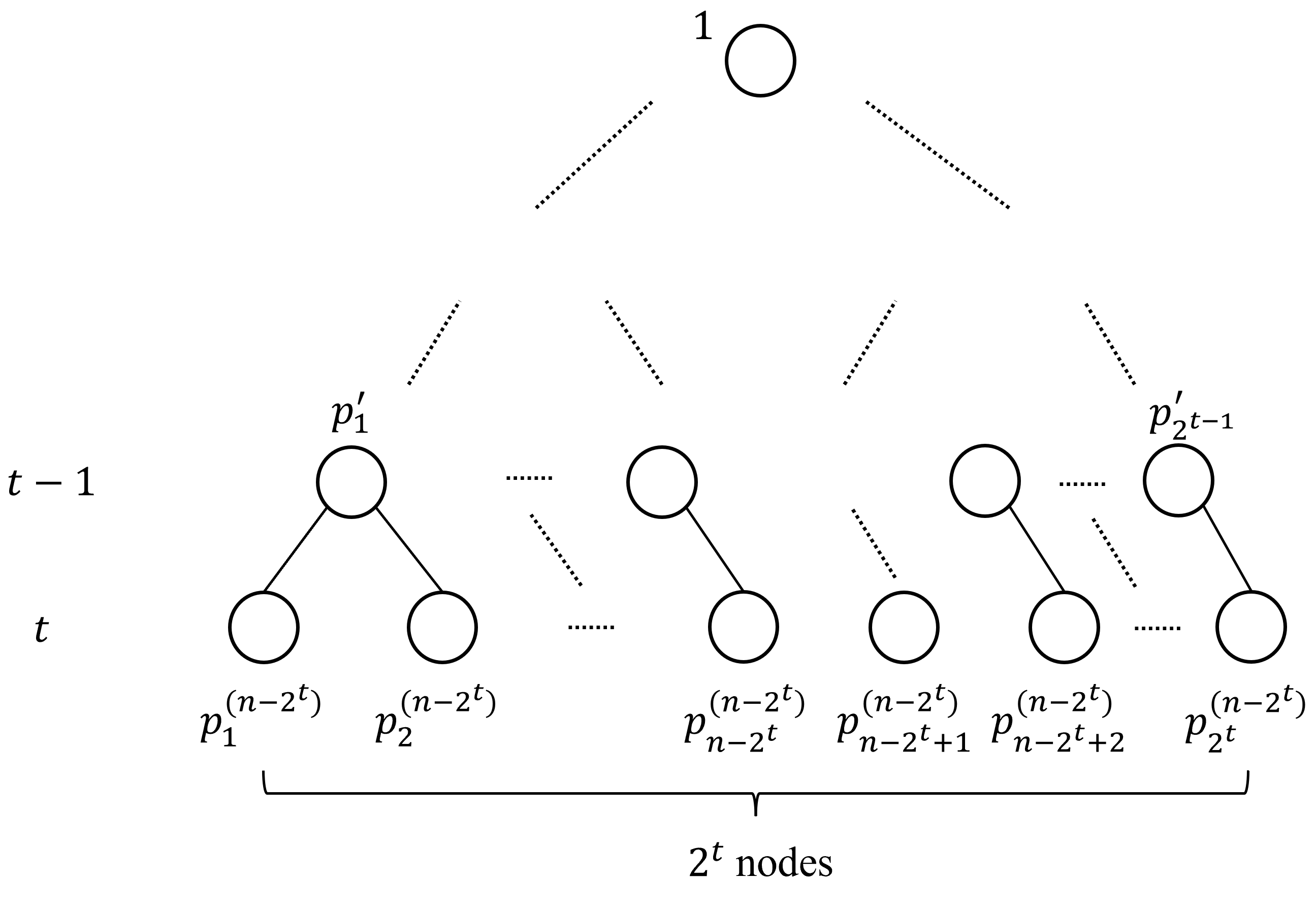}
		\caption{The change of probability distribution during the last $2^{t}$ merges.}
		\label{phase2}
	\end{figure}
	\begin{lemma}\label{lemma2}
		The probability distribution $\mathcal{P}_{n}$ satisfies~\eqref{eq12}. If $p_{n-1}+p_{n}\ge p_{1}$, then $l_{1}=\lfloor \log_{2}n\rfloor$.
	\end{lemma}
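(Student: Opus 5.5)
We need to prove: if $p_{n-1}+p_n\ge p_1$ (all pairwise sums of any two symbols are at least the maximum), then the standardized Huffman code gives $l_1=\lfloor\log_2 n\rfloor$. The figures indicate the intended split: the first $n-2^t$ merges, then the last $2^t$ merges, with $t=\lfloor\log_2 n\rfloor$ so that $2^t\le n<2^{t+1}$.

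\textbf{Phase 1: the first $n-2^t$ merges.} We claim that during these merges each merge combines two original symbols, and the merged node is placed at the front of $\mathcal{P}_n^{(m)}$. This follows by induction on $m$ for $0\le m\le n-2^t$. Suppose that after $m$ merges the list consists of $m$ merged nodes, each a sum of two original probabilities and hence $\ge p_1$ by hypothesis, followed by the $n-2m$ untouched original symbols $p_1\ge\dots\ge p_{n-2m}$. As long as at least two original symbols remain, the two smallest nodes are the last two original symbols. This uses $p_i\le p_1\le$ any merged node. Their sum is $\ge p_1$.

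Rule \eqref{eq10} then inserts the new node before every original symbol. Among the merged nodes, ties are placed consistently, so the relative order among merged nodes does not matter for depth. We need $n-2m\ge 2$ throughout, i.e. $2(n-2^t)\le n$, which is exactly $n\le 2^{t+1}$. After $n-2^t$ merges there are $2^t$ nodes: $n-2^t$ merged pairs and $2^{t+1}-n$ original symbols, including symbol $1$, since $2^{t+1}-n\ge 1$.

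\textbf{Phase 2: the last $2^t$ merges.} We show that the remaining $2^t$ nodes form a perfectly balanced tree of depth $t$. Let $A$ be the current multiset of node weights. The key invariant is $\max A\le 2\min A$. It holds at the start of Phase 2: $\min\ge p_{n-2^{t+1}+n}$-type original weights (or merged weights) and $\max\le p_{a}+p_{b}\le 2p_1$, while every original node satisfies $p_i\ge p_{n-1},p_n$. More carefully, $\min A\ge \min(p_{2^{t+1}-n},\,p_{n-1}+p_n)$ and $\max A\le \max(p_1,\,2p_{?})\le 2\min A$. This needs checking, and it is the main obstacle: we must show that a merged node $p_i+p_j$ never exceeds twice the smallest remaining original symbol. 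The merged pairs in Phase 1 consist of symbols of index at least $2^{t+1}-n+1$, so each pair sum is $\le 2p_{2^{t+1}-n}$, which is twice the smallest remaining original.

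Given $\max\le 2\min$ with an even number $2^s$ of nodes, Huffman merges the two smallest nodes, and the resulting sum is $\ge \max$. Hence, as in Phase 1, each round of $2^{s-1}$ merges pairs up all the current nodes, and every new node is placed in front of all old ones. The new level again satisfies $\max\le 2\min$, since the new max is $\le 2\max_{\text{old}}$-ish; more precisely, it is the sum of the two largest old nodes, which is $\le 2$ times the sum of the two smallest old nodes. Iterating $t$ rounds, every Phase-2 node, in particular symbol $1$, ends at depth exactly $t$.

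\textbf{Conclusion.} Symbol $1$ was untouched in Phase 1, so $l_1=t=\lfloor\log_2 n\rfloor$. The delicate points are (i) verifying the initial invariant $\max\le 2\min$ at the start of Phase 2 from the ordering and the hypothesis, and (ii) confirming that the tie-breaking in Definition~\ref{definition3} keeps rounds aligned, i.e. that no newly created node is picked before all old nodes of the round are consumed. Point (ii) follows because a new node is $\ge$ every old node, and when it equals one, rule \eqref{eq10} places it in front of that node.
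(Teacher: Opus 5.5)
Your two-phase split and your Phase 1 argument match the paper's, but the Phase 2 invariant $\max A\le 2\min A$ is false in general at the start of Phase 2. It fails exactly when $n=2^t$. In that case Phase 1 is empty and $A$ is the original list, so you would need $p_1\le 2p_n$, and the hypothesis $p_1\le p_{n-1}+p_n$ does not give this. For example, $n=4$ with $\mathcal{P}_4=(0.3,0.3,0.3,0.1)$ satisfies $p_3+p_4\ge p_1$, yet $\max A/\min A=3$. Your verification bounds the merged pairs by $2p_{2^{t+1}-n}$ but never bounds $p_1$. This is harmless when $n>2^t$: then at least one merged pair exists, every merged pair is $\ge p_{n-1}+p_n\ge p_1$, so the maximum is a merged pair and your invariant holds. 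When $n=2^t\ge 4$, however, your round argument starts from a premise that is false.

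The fix is to replace your invariant with the weaker one your round argument actually uses, which is also what the paper uses in \eqref{eq21}--\eqref{eq23}: the two smallest weights sum to at least the largest.
\begin{itemize}
\item It follows from $\max\le 2\min$, so your check still covers $n>2^t$.
\item When $n=2^t$ it is exactly the hypothesis.
\item It is enough for your rounds. Each new node is $\ge a_{N-1}+a_N\ge a_1$, so rule \eqref{eq10} places it in front of every old node, and the rounds stay aligned.
\item It is preserved under pairing. If $a_1\ge\cdots\ge a_N$ with $N=2^s\ge 4$ and $a_{N-1}+a_N\ge a_1$, then $(a_{N-3}+a_{N-2})+(a_{N-1}+a_N)\ge a_1+a_2$, because each bracket is $\ge a_1\ge a_2$.
\end{itemize}
With this substitution your proof goes through and is essentially the paper's argument.
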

\begin{proof}
Let $t=\lfloor\log_{2}n\rfloor$, then $2^{t}\le n\le 2^{t+1}-1$. Consider the merging process, which can be divided into two phases.
	\begin{enumerate}
\item Consider the first $n-2^{t}$ merges. The following expression shows the change of probability distribution during the first $n-2^{t}$ merges.
		\begin{small}
			\begin{equation}
				\begin{aligned}
					&\mathcal{P}_{n}^{(0)}=(p^{(0)}_{1},p^{(0)}_{2},\ldots,p^{(0)}_{n})=(p_{1},p_{2},\ldots,p_{n}),\\
					&\mathcal{P}_{n}^{(1)}=\big(p^{(1)}_{1}(=p_{n-1}+p_{n}),p^{(1)}_{2}(=p_{1}),\ldots,p^{(1)}_{n-1}\big),\\
					&~~~~~~~~~~~~~~~~~~~~~~~~~~~~~~~~~~~~\vdots \\
					&\mathcal{P}_{n}^{(i)}=(p_{1}^{(i)}(=p_{n-i}^{(i-1)}+p_{n-i+1}^{(i-1)}),\ldots,p_{i+1}^{(i)}(=p_{1}),\ldots,p_{n-i}^{(i)}),\\
					&~~~~~~~~~~~~~~~~~~~~~~~~~~~~~~~~~~~~\vdots \\
					&\mathcal{P}_{n}^{(n-2^{t})}=(p_{1}^{(n-2^{t})},\ldots,p_{n-2^{t}+1}^{(n-2^{t})}(=p_{1}),\ldots,p_{2^{t}}^{(n-2^{t})}).
				\end{aligned}\label{eq20}
			\end{equation}
		\end{small}\par
		Due to $n\le 2^{t+1}-1$, the number of probabilities for $n-2^t$ mergers, $2(n-2^t)$, is less than or equal to $n-1$. Thus, in $\mathcal{P}_{n}^{(n-2^{t})}$, $p_{1}$ has not been yet selected to merge. The code tree is as shown in Fig.~\ref{phase1}. By Theorem~\ref{theorem2}, $l_{1}$ equals codeword length of $p_{1}$ in $\mathcal{P}_{n}^{(n-2^{t})}$.		
\item Consider the last $2^{t}$ merges. Examine the code tree of $\mathcal{P}_{n}^{(n-2^{t})}$. We assert the tree is a perfect binary tree of depth $t$. Consider mathematical induction.\par
		Considering the characteristics of $\mathcal{P}_{n}^{(n-2^{t})}$, we obtain
		\begin{small}
			\begin{equation}
				\begin{aligned}
					p_{1}^{(n-2^{t})}&=p_{2^{t}}^{(n-2^{t}-1)}+p_{2^{t}+1}^{(n-2^{t}-1)}\\
					&\le p_{2^{t}-2}^{(n-2^{t}-1)}+p_{2^{t}-1}^{(n-2^{t}-1)}\\
					&=p_{2^{t}-1}^{(n-2^{t})}+p_{2^{t}}^{(n-2^{t})}\\
					&\le p_{2^{t}-3}^{(n-2^{t})}+p_{2^{t}-2}^{(n-2^{t})}.\\
				\end{aligned}\label{eq21}
			\end{equation}
		\end{small}Furthermore, for $0\le k\le 2^{t-1}-2$,
		\begin{small}
			\begin{equation}
				p_{2k+1}^{(n-2^{t})}+p_{2k+2}^{(n-2^{t})}\ge p_{2k+3}^{(n-2^{t})}+p_{2k+4}^{(n-2^{t})}\ge p_{1}^{(n-2^{t})}.\label{eq22}
			\end{equation}
		\end{small}\par	
		When $t=1$, $\mathcal{P}_{n}^{(n-2^{t})}$ contains only two probabilities.
		Merging these two nodes directly results in a full binary tree of depth $1$. The conclusion holds.
		
		Assume that for any $t-1$ (i.e., when the number of nodes is $2^{t-1}$), if the distribution satisfies \eqref{eq22}, its Huffman tree is a full binary tree of depth $t-1$.\par
		We now proceed to demonstrate that the same conclusion holds for $t$.
		Let $p'_{m}\triangleq p_{2m-1}^{(n-2^{t})}+p_{2m}^{(n-2^{t})}$, for $1\le m\le 2^{t-1}$. Obviously, for $0\le k\le 2^{t-2}-2$,
		\begin{equation}
			p'_{2k+1}+p'_{2k+2}\ge p'_{2k+3}+p'_{2k+4}\ge p'_{1}.\label{eq23}
		\end{equation}
		Consider the merging process of $\mathcal{P}_{n}^{(n-2^{t})}$, for $1\le i\le 2^{t-1}$,
		\begin{small}
			\begin{equation}
				\begin{aligned}
					&\mathcal{P}_{n}^{(n-2^{t}+1)}=\{p'_{2^{t-1}},p_{1}^{(n-2^{t})},\ldots,p_{2^{t}-2}^{(n-2^{t})}\}\\
					&\mathcal{P}_{n}^{(n-2^{t}+2)}=\{p'_{2^{t-1}-1},p'_{2^{t-1}},\ldots,p_{2^{t}-4}^{(n-2^{t})}\}\\
					&~~~~~~~~~~~~~~~~~~~~~~~~~~~~\vdots\\
					&\mathcal{P}_{n}^{(n-2^{t}+i)}=\{p'_{2^{t-1}-i+1},p'_{2^{t-1}-i+2},\ldots,p_{2^{t}-2i}^{(n-2^{t})}\}\\
					&~~~~~~~~~~~~~~~~~~~~~~~~~~~~\vdots\\
					&\mathcal{P}_{n}^{(n-2^{t-1})}=\{p'_{1},p'_{2},\ldots,p'_{2^{t-1}}\}\\
				\end{aligned}.\label{eq24}
			\end{equation}
		\end{small}\par
		Since \eqref{eq23} holds, we examine the probability distribution $\mathcal{P}_{n}^{(n-2^{t-1})}$ satisfying \eqref{eq22}, its Huffman tree is a full binary tree of depth $t-1$. The code tree is as shown in Fig.~\ref{phase2}.\par
		Therefore, the tree of $\mathcal{P}_{n}^{(n-2^{t})}$ is a perfect binary tree of depth $t$. From that, $l_{1}=t=\lfloor\log_{2}n\rfloor$.
		The proof is complete.
	\end{enumerate}
	\end{proof}
	Based on Lemma 2, combined with the concept of $\delta$-occasion and considering the specific value of $l_{1}$ in a general probability distribution, we obtain the following corollary.
	\begin{corollary}\label{corollary1}
		Consider a finite source $\mathcal{S}_{n}=(\mathcal{X}_{n},\mathcal{P}_{n})$, where $\mathcal{P}_{n}$ satisfies \eqref{eq12} and $p_{1}<\frac{1}{2}$. Then $l_{1}=\lfloor \log_{2}(n-\delta)\rfloor$.
	\end{corollary}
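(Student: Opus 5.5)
The plan is to reduce Corollary~\ref{corollary1} to Lemma~\ref{lemma2}, applied to the reduced distribution $\mathcal{P}_{n}^{(\delta)}$ produced at the $\delta$-occasion. Write $N\triangleq n-\delta$; then $\mathcal{P}_{n}^{(\delta)}$ has $N$ entries. The argument has three steps.

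\textbf{Step 1: $p_{1}$ survives the first $\delta$ merges unmerged and remains the largest entry.} By Definition~\ref{definition4}, $\delta$ is the smallest index for which \eqref{eq13} and \eqref{eq14} hold. Hence for every step $m\le \delta-1$ the two smallest entries of $\mathcal{P}_{n}^{(m)}$ have sum strictly less than $p_{1}$. I would first check this claim inductively, because the definition only asserts the inequality at step $\delta-1$. The induction should follow from monotonicity: a merged sum is at least as large as each of its parts, so the sequence of sums of the two smallest entries is nondecreasing in $m$. Given the claim, the following facts hold for $m<\delta$:
\begin{itemize}
\item $p_{1}$ is never among the two smallest entries;
\item no merged node ever exceeds $p_{1}$;
\item under the standardized insertion rule of Definition~\ref{definition3}, ties with $p_{1}$ do not arise, since every merged sum is strictly below $p_{1}$.
\end{itemize}
Therefore $p_{1}^{(\delta)}=p_{1}$ is the symbol-$1$ leaf itself.

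\textbf{Step 2: transfer the codeword length.} Applying Theorem~\ref{theorem2} $\delta$ times, any optimal code for $\mathcal{P}_{n}^{(\delta)}$ expands to an optimal code for $\mathcal{P}_{n}$ by splitting merged leaves. This expansion never touches the leaf of $p_{1}$. The standardized Huffman tree of $\mathcal{P}_{n}$ is exactly this expansion of the standardized Huffman tree of $\mathcal{P}_{n}^{(\delta)}$. Consequently $l_{1}$ equals the depth of the leaf $p_{1}^{(\delta)}$ in the Huffman tree of $\mathcal{P}_{n}^{(\delta)}$.

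\textbf{Step 3: apply Lemma~\ref{lemma2} to $\mathcal{P}_{n}^{(\delta)}$ after normalization.} Dividing all entries by their common total (which is $1$) does not change the Huffman tree. The entries are sorted, the largest is $p_{1}$, and \eqref{eq14} says the last two entries sum to at least $p_{1}$. This is precisely the hypothesis of Lemma~\ref{lemma2} for a source of size $N$, so the depth of $p_{1}^{(\delta)}$ is $\lfloor\log_{2}N\rfloor=\lfloor\log_{2}(n-\delta)\rfloor$.

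Two edge cases remain. When $\delta=0$ the statement is Lemma~\ref{lemma2} directly. We also need $N\ge 2$; this follows because $p_{1}<\frac12$ forces at least three entries before the final merges.

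\textbf{Main obstacle.} The delicate point is the bookkeeping in Step 1 combined with tie-breaking in Lemma~\ref{lemma2}. Lemma~\ref{lemma2}'s proof tracks the position of $p_{1}$ inside its reduced sequence, so I must ensure that in $\mathcal{P}_{n}^{(\delta)}$ the copy of $p_{1}$ sits where that proof expects. Ties arise when some merged node equals $p_{1}$ after step $\delta$. These are handled by Definition~\ref{definition3}: a merged node is inserted before entries it equals, so the argument of Lemma~\ref{lemma2} applies verbatim. Even if the position differs, Lemma~\ref{lemma2} shows the whole tree is (nearly) complete with leaves at depths $t$ or $t+1$. I would then argue that $p_{1}$, being a largest entry and never merged in the first $N-2^{t}$ steps, lies at depth $t$.
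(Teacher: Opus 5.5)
Your proposal is correct and follows the paper's proof essentially step for step. You use the nondecreasing chain of two-smallest sums to show $p_1$ is untouched (and stays first) during the first $\delta$ merges, transfer $l_1$ to $\mathcal{P}_n^{(\delta)}$ via Theorem~\ref{theorem2}, and then apply Lemma~\ref{lemma2} there using \eqref{eq14}; your added care about the monotonicity justification, tie-breaking under Definition~\ref{definition3}, and $n-\delta\ge 2$ only makes explicit what the paper leaves implicit.
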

	\begin{proof}
		In the process of standardized Huffman coding of $\mathcal{P}_{n}$, considering its $\delta$-occasion, then from Definition~\ref{definition4}, the probability distributions satisfies \eqref{eq13} and \eqref{eq14}.
		Applying Lemma~\ref{lemma2} to $\mathcal{P}^{(\delta)}_{n}$, then the codeword length of $p^{(\delta)}_{1}$ is $l_{1,n-\delta}=\lfloor \log_{2}(n-\delta)\rfloor$.\par	
		Due to
			\begin{equation*}
				\begin{aligned}
					p_{1}&>p_{n-\delta+1}^{(\delta-1)}+p_{n-\delta}^{(\delta-1)}\\
					&\ge p_{n-\delta+2}^{(\delta-2)}+p_{n-\delta+1}^{(\delta-2)}\\
					& \quad \quad \quad  \quad \vdots\\
					&\ge p_{n}^{(0)}+p_{n-1}^{(0)}
				\end{aligned} ,
			\end{equation*}
so until $\mathcal{P}_{n}$ is merged into $\mathcal{P}^{(\delta)}_{n}$, $p_{1}$ has not yet been selected to merge.
	
By Theorem~\ref{theorem2}, $l_{1}$ equals codeword length $l_{1,n-i}$ in $\mathcal{P}_{n}^{(i)}$, for any $i\in[0,\delta]$. Then
		\begin{equation}
			l_{1}=l_{1,n}=l_{1,n-1}=\cdots=l_{1,n-\delta}=\lfloor \log_{2}(n-\delta)\rfloor.
		\end{equation}
		The proof is complete.
	\end{proof}
	
	\begin{lemma}\label{lemma3}
		Consider a finite source $\mathcal{S}_{n}=(\mathcal{X}_{n},\mathcal{P}_{n})$, where $\mathcal{P}_{n}$ satisfies \eqref{eq12} and $p_{1}<\frac{1}{2}$. If $p_{1}<p$, then $l_{1}\ge\lfloor-\log_{2}p\rfloor$.
	\end{lemma}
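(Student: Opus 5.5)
We combine Corollary~\ref{corollary1} with part~1 of Lemma~\ref{lemma1}. By Corollary~\ref{corollary1}, since $p_{1}<\frac{1}{2}$, we have $l_{1}=\lfloor\log_{2}(n-\delta)\rfloor$, where $\delta$ is the index of the $\delta$-occasion of the standardized Huffman procedure. It therefore suffices to bound $n-\delta$ from below in terms of $p$.

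Apply Lemma~\ref{lemma1}(1) with $b=p$. Since $p_{1}<p$, it gives $\delta<n-\frac{1}{p}$, i.e. $n-\delta>\frac{1}{p}$. The underlying reason is that $\mathcal{P}_{n}^{(\delta)}$ has $n-\delta$ entries, each at most $p_{1}$, summing to $1$; so $(n-\delta)p_{1}\ge 1$ and $n-\delta\ge 1/p_{1}>1/p$. Taking $\log_{2}$, which is monotone, yields $\log_{2}(n-\delta)>-\log_{2}p$.

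Finally, pass to floors. Monotonicity of the floor function gives $\lfloor\log_{2}(n-\delta)\rfloor\ge\lfloor-\log_{2}p\rfloor$. Hence $l_{1}\ge\lfloor-\log_{2}p\rfloor$, as required.

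The only delicate point is the claim $p_{1}^{(\delta)}=p_{1}$ used inside Lemma~\ref{lemma1}(1). We rely on the lemma as stated; it is justified because before the $\delta$-occasion every merged sum is strictly less than $p_{1}$, by the chain of inequalities in the proof of Corollary~\ref{corollary1}. The case $\delta=0$ needs no such justification, since then $\mathcal{P}_{n}^{(0)}=\mathcal{P}_{n}$. Beyond this, the argument is a direct chain of monotone inequalities, and no step looks hard.
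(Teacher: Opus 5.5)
Your proposal is correct and is essentially the paper's proof: Lemma~\ref{lemma1}(1) with $b=p$ gives $n-\delta>1/p$, and Corollary~\ref{corollary1} then gives $l_{1}=\lfloor\log_{2}(n-\delta)\rfloor\ge\lfloor-\log_{2}p\rfloor$ by monotonicity of $\log_{2}$ and the floor. Your added justification that $p_{1}^{(\delta)}=p_{1}$ (all merged sums before the $\delta$-occasion are strictly below $p_{1}$), together with the separate $\delta=0$ case, makes explicit a step that Lemma~\ref{lemma1} uses without comment.
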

	\begin{proof}
	From Lemma~\ref{lemma1}, if $p_{1}<p$, then $\delta<n-\frac{1}{p}$.  Meanwhile, from Corollary~\ref{corollary1},
	\begin{equation}
		l_{1}=\lfloor\log_{2}(n-\delta)\rfloor\ge \lfloor-\log_{2}p\rfloor.\label{eq25}
	\end{equation}
	The proof is complete.
	\end{proof}
	\section{Main Results}
	We provide the main conclusions to be proven in this paper. First, we are the first to prove the existence of a set of intervals such that when the maximum probability $p_{1}$ falls in any of these intervals, its optimal codeword length can be immediately and uniquely determined.
	\subsection{Interval of the Maximum Probability}
	\begin{theorem}\label{theorem4}
		The finite source $\mathcal{S}_{n}=(\mathcal{X}_{n},\mathcal{P}_{n})$ satisfies \eqref{eq12}. 
If $p_{1}\in(\frac{2}{2^{k+1}+1},\frac{1}{2^{k}-1})$, then $l_{1}=k$.
	\end{theorem}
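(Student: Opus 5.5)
Using Corollary~\ref{corollary1}, the statement reduces to a counting statement about $n-\delta$. For $k\ge 1$ we have $\frac{1}{2^{k}-1}\le 1$; for $k\ge2$ the upper endpoint is at most $\frac13<\frac12$. When $k=1$, the interval $(\frac{2}{5},1)$ may contain $p_1\ge\frac12$, and then $l_1=1$ trivially by the footnote to Definition~\ref{definition4}. So I assume $p_1<\frac12$ and apply the corollary, giving $l_1=\lfloor\log_2(n-\delta)\rfloor$. The claim $l_1=k$ is then equivalent to
\[
2^{k}\le n-\delta\le 2^{k+1}-1 .
\]
I will obtain both bounds from Lemma~\ref{lemma1}, taking $b$ and $a$ to be the two endpoints of the interval.

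\textbf{Upper bound.} Take $a=\frac{2}{2^{k+1}+1}$. Since $p_1>a$, part 2) of Lemma~\ref{lemma1} gives
\[
\delta>n-\frac{2}{a}+1=n-(2^{k+1}+1)+1=n-2^{k+1}.
\]
Hence $n-\delta<2^{k+1}$, that is, $n-\delta\le 2^{k+1}-1$. I should check that the proof of part 2) is valid when $n-\delta$ is small, since it uses the terms $p_1^{(\delta)}$, $p_{n-\delta-1}^{(\delta)}$, $p_{n-\delta}^{(\delta)}$ and needs $n-\delta\ge 3$. In the edge cases $n-\delta\le 2$ we get $l_1\le1$, and the constraint $p_1<\frac12$ must then be used directly.

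\textbf{Lower bound.} Take $b=\frac{1}{2^{k}-1}$. Part 1) of Lemma~\ref{lemma1} yields only $n-\delta>2^{k}-1$, so $n-\delta\ge 2^{k}$ and hence $l_1\ge k$, as required. I also need to check that the proof of part 1) really uses $p_1^{(\delta)}=p_1$ as the largest element of $\mathcal{P}_n^{(\delta)}$. This holds because no merged node exceeds $p_1$ before the $\delta$-occasion (by \eqref{eq13}). For the purposes of this plan I take it as given.

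\textbf{Main obstacle.} The delicate step is the application of part 2). Its bound $\sum_j p_j^{(\delta)}\ge\frac{n-\delta+1}{2}p_1$ relies on every merged node being at least $p_1/2$, and on $p_1$ being counted separately from the last pair, whose sum is at least $p_1$ by \eqref{eq14}. I would re-verify this accounting, in particular that \eqref{eq14} refers to a pair in $\mathcal{P}_n^{(\delta)}$ distinct from $p_1$ when $n-\delta\ge3$. I would then handle the small cases $n-\delta\in\{1,2,3\}$ separately, using $p_1<\frac12$ and the fact that the total probability is 1.

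Combining the two bounds gives $\lfloor\log_2(n-\delta)\rfloor=k$, and therefore $l_1=k$.
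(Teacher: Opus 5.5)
Your proposal is correct and follows the paper's proof: Lemma~\ref{lemma1} with $a=\frac{2}{2^{k+1}+1}$ and $b=\frac{1}{2^{k}-1}$ gives $n-\delta\in[2^{k},2^{k+1}-1]$, Corollary~\ref{corollary1} then yields $l_1=k$, and the case $p_1\ge\frac12$ is handled as in the paper's footnote. The edge cases you flag never arise: $p_1^{(\delta)}=p_1$ is the largest entry of $\mathcal{P}_n^{(\delta)}$, so $(n-\delta)p_1\ge 1$, which together with $p_1<\frac12$ forces $n-\delta\ge 3$, and when $n-\delta=3$ the counting in part 2) simply has no middle terms and still works.
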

	\begin{proof}
	From Lemma~\ref{lemma1}, since $p_{1}\in(\frac{2}{2^{k+1}+1},\frac{1}{2^{k}-1})$, we deduce that\footnote{If $p_{1}\ge\frac{1}{2}$, the $\delta$-occasion does not actually exist. However, since $l_{1}$ is determined to be 1 in this case, which is consistent with the theorem, the following discussion will focus exclusively on the situation where $p_{1}<\frac{1}{2}$.}
		\begin{equation}
			\begin{aligned}
				\delta&\in(n-2^{k+1},n-2^{k}+1),\\
				&n-\delta\in(2^{k}-1,2^{k+1}).\\
			\end{aligned}\label{eq26}
		\end{equation}
		Due to $n-\delta\in \mathbb{N}$, we obtain
		\begin{equation}
			n-\delta\in[2^{k},2^{k+1}-1].\label{eq27}
		\end{equation}
		Consequently, by Corollary~\ref{corollary1}, we have
		\begin{equation}
			l_{1}\in[\lfloor\log_{2}2^{k}\rfloor,\lfloor\log_{2}(2^{k+1}-1)\rfloor];\label{eq28}
		\end{equation}
		that is $l_{1}=k$.
	\end{proof}
	\begin{remark}
		The sum of the lengths of all probability intervals of Theorem~\ref{theorem4} can be calculated.
		\begin{small}
			\begin{equation*}
				\begin{aligned}		
& \quad \sum_{k=1}^{\infty}(\frac{1}{2^{k}-1}-\frac{2}{2^{k+1}+1})  \\ &=(\sum_{k=1}^{10}\frac{1}{2^{k}-1}+\sum_{k=11}^{\infty}\frac{1}{2^{k}-1})-(\sum_{k=1}^{10}\frac{2}{2^{k+1}+1}+\sum_{k=11}^{\infty}\frac{2}{2^{k+1}+1}).\\
				\end{aligned}
			\end{equation*}
		\end{small}
Since\begin{small}
\begin{equation*}
	\begin{aligned}
		&\sum_{k=11}^{\infty}\frac{1}{2^{k}-1}<\sum_{k=11}^{\infty}\frac{1}{2^{k-1}}=\frac{1}{2^{9}},\\
&\sum_{k=11}^{\infty}\frac{2}{2^{k+1}+1}=\sum_{k=11}^{\infty}\frac{1}{2^{k}+\frac{1}{2}}<\sum_{k=11}^{\infty}\frac{1}{2^{k}}=\frac{1}{2^{10}},
	\end{aligned}
\end{equation*}
\end{small}therefore, we obtain
\begin{small}
	\begin{equation*}
	\begin{aligned}		
& \quad \sum_{k=1}^{\infty}(\frac{1}{2^{k}-1}-\frac{2}{2^{k+1}+1})  \\ 
&>\sum_{k=1}^{10}\frac{1}{2^{k}-1}-\sum_{k=1}^{10}\frac{2}{2^{k+1}+1}-\frac{1}{2^{10}}\\
&>0.743385,    \\
	\end{aligned}
	\end{equation*}
	\end{small}
and
\begin{small}
	\begin{equation*}
	\begin{aligned}		
& \quad \sum_{k=1}^{\infty}(\frac{1}{2^{k}-1}-\frac{2}{2^{k+1}+1})  \\
&<\sum_{k=1}^{10}\frac{1}{2^{k}-1}+\frac{1}{2^{9}}-\sum_{k=1}^{10}\frac{2}{2^{k+1}+1}\\
&<0.746315.    \\
	\end{aligned}
	\end{equation*}
\end{small}
 This implies that according to Theorem~\ref{theorem4}, the probability interval of $p_{1}$ that $l_{1}$ can be immediately determined occupies nearly three-quarters of $(0,1)$.
\end{remark}
	In fact, if $p_{1}$ does not fall in any of the above intervals, $l_{1}$ cannot be uniquely determined. For the case of $k=2$, counterexamples can be provided to show that if $p_{1}\notin(\frac{2}{9},\frac{1}{3})$, then $l_{1}=2$ is not necessarily guaranteed.\par
	\textit{Counterexample 1:} If $\frac{1}{3}\le p_{1}<\frac{1}{2}$
	then we construct a distribution $\mathcal{P}_{1}$ with $0\le \epsilon<\frac{1}{6}$.
	\begin{small}
		\begin{equation}
			p_{i}=\left\{
			\begin{aligned}
				&\frac{1}{3}+\epsilon,&& i=1\\
				&\frac{1}{3},&& i=2 \\
				&\frac{1}{3}-\epsilon,&& i=3
			\end{aligned}.
			\right.\label{eq29}
		\end{equation}
	\end{small}
	Clearly, $l_{1}=1$.\par
	\begin{figure}[!t]
		\centering
		\includegraphics[width=0.25\textwidth]{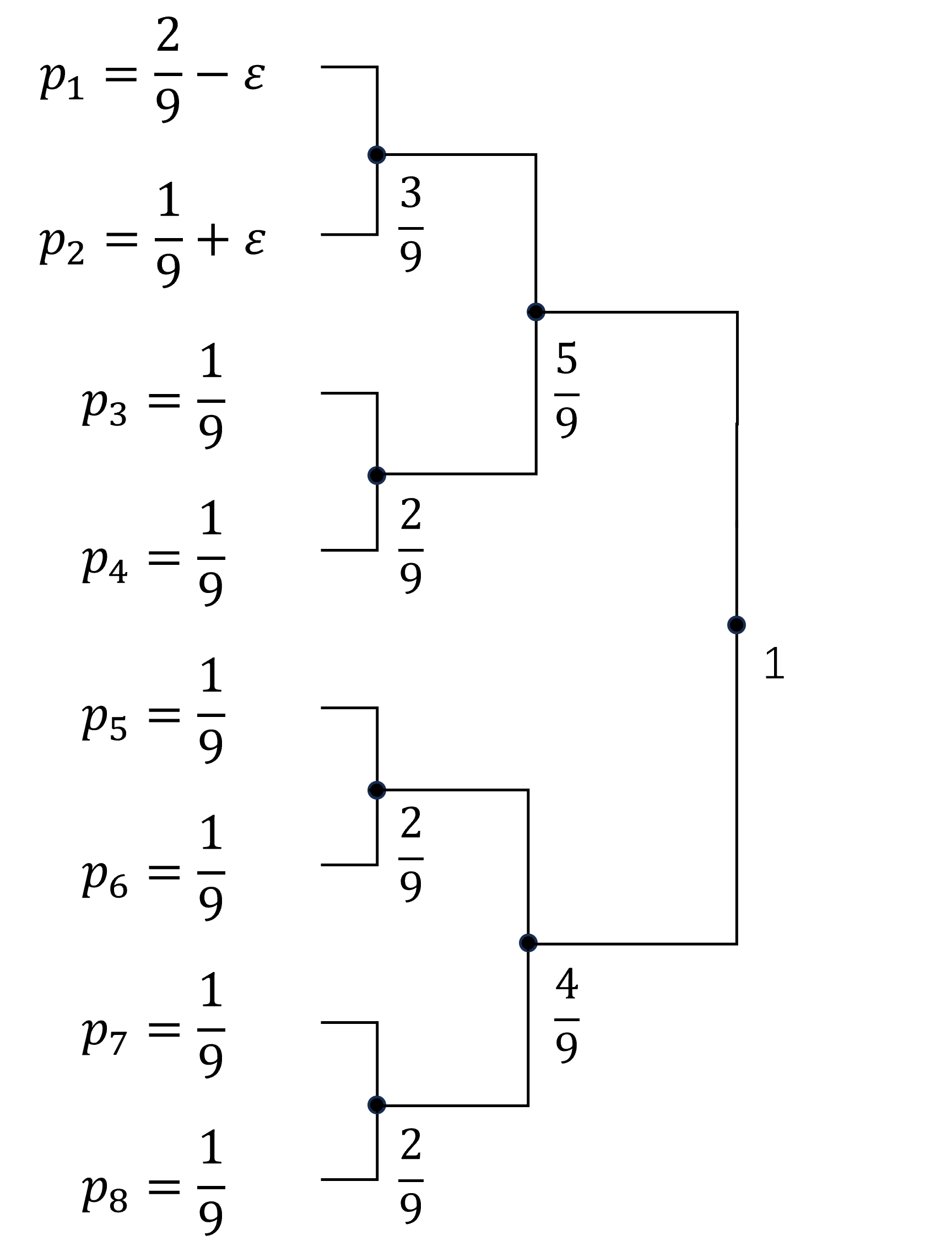}
		\caption{Counterexample 2 the corresponding coding tree is as shown in the figure, clearly, $l_{1}=3$.}
		\label{fig2}
	\end{figure}
	\begin{figure}[!t]
		\centering
		\includegraphics[width=0.35\textwidth]{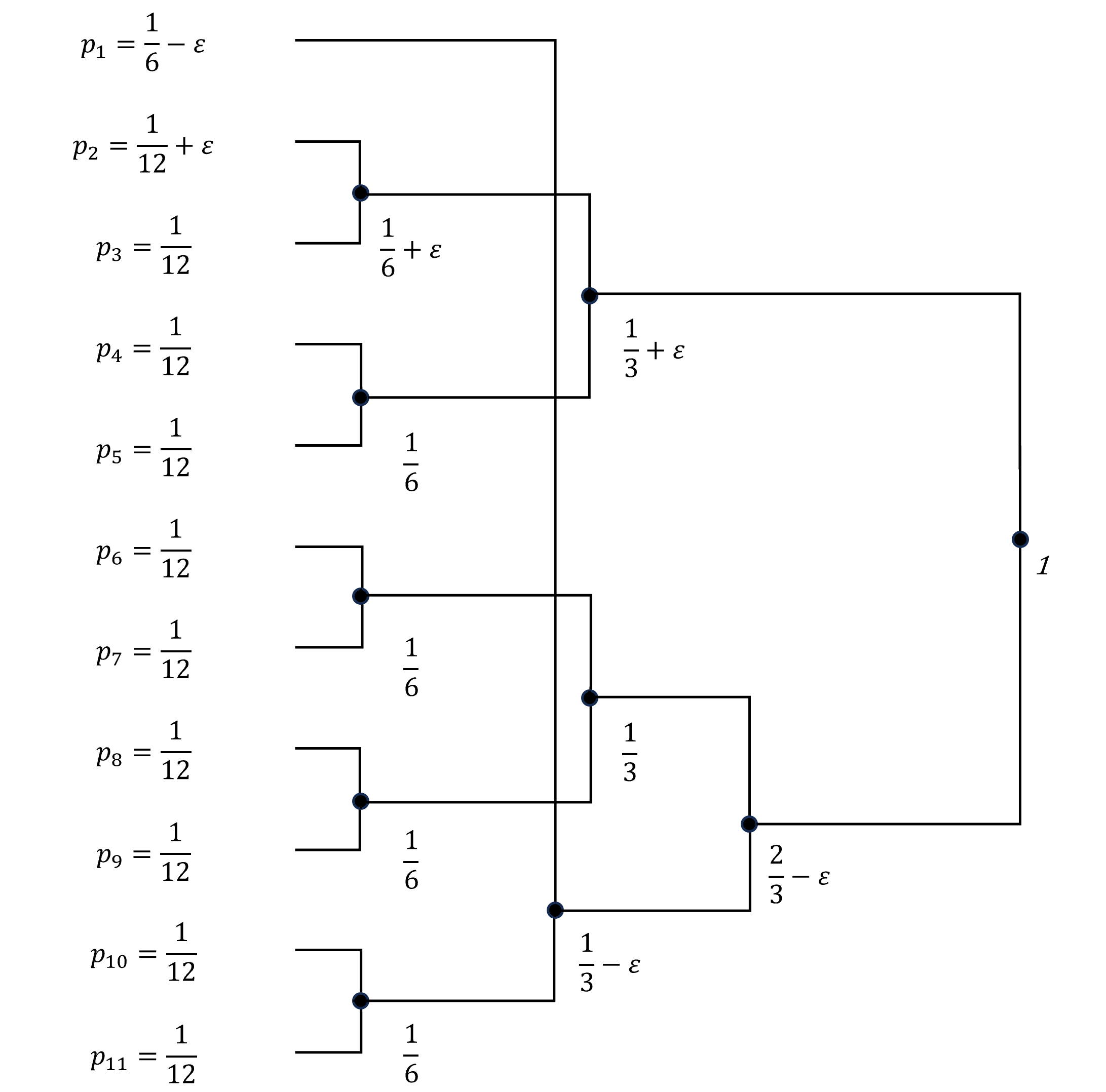}
		\caption{Counterexample 3 the corresponding coding tree is as shown in the figure, clearly, $l_{1}=3$.}
		\label{fig3}
	\end{figure}
	\textit{Counterexample 2:} If $\frac{1}{6}<p_{1}\le\frac{2}{9}$
	then we construct a distribution $\mathcal{P}_{2}$ with $0\le\epsilon<\frac{1}{18}$.
	\begin{small}
	\begin{equation}
		p_{i}=\left\{
		\begin{aligned}
			&\frac{2}{9}-\epsilon,&& i=1\\
			&\frac{1}{9}+\epsilon,&& i=2 \\
			&\frac{1}{9},&& 3\le i\le 8
		\end{aligned}.
		\right.\label{eq30}
	\end{equation}
	\end{small}
	The code tree is as shown in Fig.~\ref{fig2}. Clearly, $l_{1}=3$.\par
	\textit{Counterexample 3:} If $\frac{1}{8}\le p_{1}\le\frac{1}{6}$
	then we construct a distribution $\mathcal{P}_{3}$ with $0\le\epsilon\le\frac{1}{24}$.
	\begin{small}
		\begin{equation}
			p_{i}=\left\{
			\begin{aligned}
				&\frac{1}{6}-\epsilon,&& i=1\\
				&\frac{1}{12}+\epsilon,&& i=2 \\
				&\frac{1}{12},&& 3\le i\le 11
			\end{aligned}.
			\right.\label{eq31}
		\end{equation}
	\end{small}
	The code tree is as shown in Fig.~\ref{fig3}. Clearly, $l_{1}=3$.\par
	\textit{Counterexample 4:} If $p_{1}<\frac{1}{8}$, then by Lemma~\ref{lemma3}, we have $l_{1}\ge 3$.\par
	On the basis of Theorem~\ref{theorem4}, we can prove the following result in the case of infinite sources.
	\begin{theorem}\label{theorem5}
		Consider an infinite source $\mathcal{S}=(\mathcal{X},\mathcal{P})$. If
		\begin{small}
			\begin{equation}
				p_{1}\in(\frac{2}{2^{k+1}+1},\frac{1}{2^{k}-1}),\label{eq32}
			\end{equation}
		\end{small}then $l_{1}^{best}=k$.
	\end{theorem}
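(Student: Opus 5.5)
The plan is to pass from the infinite source to its truncations via Theorem~\ref{theorem1}, and to apply Theorem~\ref{theorem4} to each truncation. Fix $k$ and suppose $p_{1}\in(\frac{2}{2^{k+1}+1},\frac{1}{2^{k}-1})$. For each $n$, consider the truncated source $\mathcal{S}_{n}$ with distribution $\mathcal{P}_{n}=(q_{1},\ldots,q_{n})$, where $q_{i}=p_{i}/S_{n}$. Normalization preserves the ordering \eqref{eq12}, so $q_{1}$ is still the largest probability.

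\textbf{Step 1: the truncated maximum lands in the open interval.} We have $S_{n}\uparrow 1$ as $n\to\infty$, so $q_{1}=p_{1}/S_{n}\downarrow p_{1}$. The interval is open and contains $p_{1}$, so there is an $N_{0}$ such that $q_{1}\in(\frac{2}{2^{k+1}+1},\frac{1}{2^{k}-1})$ for all $n>N_{0}$. Concretely, $q_{1}\ge p_{1}$ always exceeds the lower endpoint, and $q_{1}<\frac{1}{2^{k}-1}$ holds once $S_{n}>p_{1}(2^{k}-1)$. This eventually holds because $p_{1}(2^{k}-1)<1$.

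\textbf{Step 2: the Huffman lengths stabilize.} For every $n>N_{0}$, Theorem~\ref{theorem4} applied to $\mathcal{S}_{n}$ gives $l_{1,n}=k$. This is the codeword length of symbol $1$ in the standardized Huffman code $C_{n}$. By Theorem~\ref{theorem1}, some subsequence $C_{n_{j}}$ converges to an optimal code $C_{best}$. By Definition~\ref{definition1}, the first codeword of $C_{n_{j}}$ is eventually equal to the first codeword of $C_{best}$. Its length along the subsequence is eventually $k$, so $l^{best}_{1}=k$.

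\textbf{Main obstacle.} The delicate point is interpretive. Theorem~\ref{theorem1} concerns Huffman codes $C_{n}$ in general, while Theorem~\ref{theorem4} was proved for the standardized Huffman procedure of Definition~\ref{definition3}. I would take $C_{n}$ to be the standardized Huffman code, which is one valid Huffman code, and note that Linder et al.'s compactness argument works for any choice of optimal codes for the truncations.

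A second subtlety is that the statement should be read as ``the optimal code $C_{best}$ produced by Theorem~\ref{theorem1} has $l^{best}_{1}=k$''. If uniqueness over all optimal codes were demanded, one would need an extra argument, since ties can allow multiple optimal codes. I would state explicitly that the conclusion refers to the limit code given by Theorem~\ref{theorem1}. The degenerate case $k=1$ with $p_{1}\ge\frac12$ is handled as in the footnote to Theorem~\ref{theorem4}: $l_{1,n}=1$ for all $n$, and the same limit argument applies.
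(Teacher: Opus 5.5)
Your proposal is correct and follows essentially the same route as the paper: you truncate the source, use $p_1/S_n\to p_1$ so that the normalized maximum eventually lies in the open interval, apply Theorem~\ref{theorem4} to get $l_{1,n}=k$, and pass to $C_{best}$ through the convergent subsequence of Theorem~\ref{theorem1} and Definition~\ref{definition1}. Your observation that $q_1\ge p_1$ makes the lower endpoint automatic is a small simplification of the paper's symmetric $\epsilon$-argument, and your remarks about taking the standardized Huffman codes in Theorem~\ref{theorem1} and about the conclusion referring to the limit code are sensible clarifications of points the paper leaves implicit.
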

	\begin{proof}
	Let $\mathcal{S}_{n}=(\mathcal{X}_{n},\mathcal{P}_{n})$ denote the truncated version of $\mathcal{S}$ with size $n$, where $\mathcal{P}_{n}=(q_{1},q_{2},...,q_{n})$ satisfies \eqref{eq6} and $C_{n}$ be Huffman code for $\mathcal{X}_{n}$.\par
	By Theorem~\ref{theorem1}, $\{C_{n}\}_{n=1}^{\infty}$ contains a subsequence $C_{n_{m}}\rightarrow C_{best}$. Meanwhile, the sequence $\{S_{n_{m}}\}_{m=1}^{\infty}$ is increasing and satisfies
	\begin{small}
		\begin{equation}
			S_{\infty}=\displaystyle\sum_{j=1}^{\infty}p_{j}=1.\label{eq33}
		\end{equation}
	\end{small}Hence $S_{n_{m}}\rightarrow1$, we have $p_{1}/S_{n_{m}}\rightarrow p_{1}.\label{eq34}$
	Let
	\begin{small}
		\begin{equation}
			\epsilon\triangleq\min\left\{p_{1}-\frac{2}{2^{k+1}+1},\frac{1}{2^{k}-1}-p_{1}\right\},\label{eq35}
		\end{equation}
	\end{small}there exists $N$ such that for all $n_{m}>N,\,|\frac{p_{1}}{S_{n_{m}}}-p_{1}|<\epsilon$.
	Therefore, we obtain
	\begin{small}
		\begin{equation}
			\begin{aligned}
				p_{1}\!-\!p_{1}\!+\!\frac{2}{2^{k+1}\!+\!1}\leq p_{1}\!-\!\epsilon<\frac{p_{1}}{S_{n_{m}}}&<p_{1}\!+\!\epsilon \leq p_{1}\!+\!\frac{1}{2^{k}\!-\!1}\!-\!p_{1};\\
			\end{aligned}\label{eq36}
		\end{equation}
	\end{small}that is
	\begin{equation}
		\frac{p_{1}}{S_{n_{m}}}\in(\frac{2}{2^{k+1}+1},\frac{1}{2^{k}-1}).\label{eq37}
	\end{equation}
	By Theorem~\ref{theorem4}, for any $n_{m}>N$, $C_{n_{m}}$ satisfies $l_{1,n_{m}}=k$. From Definition~\ref{definition1}, $l_{1}^{best}=k$.
	\end{proof}
	\subsection{A Class of Recursively Structured Probability Distributions}
	Consider a class of infinite sources $\mathcal{S}=(\mathcal{X},\mathcal{P})$, where $\mathcal{P}$ satisfies \eqref{eq1}. Let $\alpha_{1}\triangleq p_{1},\,\alpha_{m}\triangleq\frac{p_{m}}{1-\sum_{j=1}^{m-1}p_{j}}(m\ge 2)$. Below, we use $\{\alpha_{i}\}_{i=1}^{\infty}$ to represent $(p_{1},p_{2},...)$.
	\begin{enumerate}
\item If $m=1,\,p_{1}=\alpha_{1}$. If $m=2,\,p_{2}=\alpha_{2}(1-\alpha_{1})$.
\item  Assume for $2\le k\le m-1,\, p_{k}=\alpha_{k}\displaystyle\prod_{j=1}^{k-1}(1-\alpha_{j})$, then
	\begin{small}
		\begin{equation}
			\begin{aligned}
				p_{m}&=\alpha_{m}(1-\sum_{j=1}^{m-1}p_{j})\\
				&=\alpha_{m}(1-\alpha_{1}-\sum_{j=2}^{m-1}p_{j})\\
				&=\alpha_{m}(1-\alpha_{1}-\sum_{j=2}^{m-1}\alpha_{j}\prod_{l=1}^{j-1}(1-\alpha_{l}))\\
				&=\alpha_{m}(1-\alpha_{1})(1-\alpha_{2}-\sum_{j=3}^{m-1}\alpha_{j}\prod_{l=2}^{j-1}(1-\alpha_{l}))\\
				&~~~~~~~~~~~~~~~~~~~~~~~~~\vdots\\
				&=\alpha_{m}\displaystyle\prod_{j=1}^{m-1}(1-\alpha_{j})
			\end{aligned}.\label{eq38}
		\end{equation}
	\end{small}
\end{enumerate}
	Therefore, we obtain\footnote{If for all $i,j$ we have $\alpha_{i}=\alpha_{j}$, then the probability distribution reduces to a geometric one.}
	\begin{small}
		\begin{equation}
			p_{1}=\alpha_{1},\,p_{m}=\alpha_{m}\displaystyle\prod_{j=1}^{m-1}(1-\alpha_{j}),\,\, m\ge 2.\label{eq39}
		\end{equation}
	\end{small}

	\begin{theorem}\label{theorem6}
		If all $\alpha_{i}\in(0.4,1)$, then the optimal code lengths corresponding to the infinite source $\mathcal{S}=(\mathcal{X},\mathcal{P})$ is
		 $\{l_{i}^{best}=i\}_{i=1}^{\infty}$, where $\mathcal{P}$ satisfies
		 \eqref{eq39}.
	\end{theorem}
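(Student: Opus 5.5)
The plan is to reduce Theorem~\ref{theorem6} to the anti-uniform condition \eqref{eq3}, using the product representation \eqref{eq39}. I will then pass from finite truncations to the infinite source with Theorem~\ref{theorem3} and Theorem~\ref{theorem1}, in the same way as in the proof of Theorem~\ref{theorem5}.

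\textbf{Step 1: closed form of the tails.} Put $T_m\triangleq 1-\sum_{j=1}^{m}p_j$. By the definition of $\alpha_m$ we have $p_m=\alpha_m T_{m-1}$, hence $T_m=T_{m-1}-p_m=(1-\alpha_m)T_{m-1}$. Induction then gives
\begin{equation*}
T_m=\prod_{j=1}^{m}(1-\alpha_j).
\end{equation*}
Because $\sum_i p_i=1$ by hypothesis, the tail $\sum_{k\ge i+2}p_k$ equals $T_{i+1}$.

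\textbf{Step 2: verify \eqref{eq3}.} Divide both sides of \eqref{eq3} by $\prod_{j<i}(1-\alpha_j)>0$. The condition becomes
\begin{equation*}
(1-\alpha_i)(1-\alpha_{i+1})\le\alpha_i .
\end{equation*}
Since $\alpha_{i+1}>0.4$, the left side is less than $0.6(1-\alpha_i)$. Moreover $0.6(1-\alpha_i)\le\alpha_i$ exactly when $\alpha_i\ge 0.375$, and this holds because $\alpha_i>0.4$. So \eqref{eq3} holds for every $i$. The monotonicity \eqref{eq1} is part of the standing assumptions on $\mathcal{P}$, so I take it as given.

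\textbf{Step 3: pass to the infinite source.} For each truncation $\mathcal{S}_n$ with normalized probabilities $q_i=p_i/S_n$, the finite condition \eqref{eq8} holds, because
\begin{equation*}
\sum_{k=i+2}^{n}q_k\le \frac{1}{S_n}\sum_{k\ge i+2}p_k\le \frac{p_i}{S_n}=q_i .
\end{equation*}
Theorem~\ref{theorem3} then gives an anti-uniform Huffman code with $l_{i,n}=i$ for $i\le n-1$ and $l_{n,n}=n-1$. The main obstacle is the uniqueness issue: Theorem~\ref{theorem1} concerns a specific sequence of Huffman codes $C_n$, whereas Theorem~\ref{theorem3} only asserts that \emph{some} Huffman code is anti-uniform. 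I would handle this in one of two ways. The first option is to note that the argument behind Theorem~\ref{theorem1} works for any choice of Huffman codes $C_n$, so we may take the anti-uniform ones. The second option is to check directly that the standardized procedure of Definition~\ref{definition3} produces the anti-uniform tree. Indeed, \eqref{eq8} ensures that the merged node of the last $n-i$ symbols never exceeds $p_{i}$, so with the tie rule of \eqref{eq10} it is merged next with $p_{i}$.

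\textbf{Step 4: conclude.} Fix $i$. For every $n>i$ the $i$-th codeword length of $C_n$ equals $i$. Along the convergent subsequence $C_{n_m}\to C_{best}$ from Theorem~\ref{theorem1}, the $i$-th codeword is eventually constant by Definition~\ref{definition1}, so its length is eventually $i$. Therefore $l_i^{best}=i$ for all $i\ge1$.
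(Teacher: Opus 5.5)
Your proposal is correct and follows essentially the same route as the paper: the product form \eqref{eq39} reduces the anti-uniform tail condition on the truncations to $(1-\alpha_i)(1-\alpha_{i+1})<\alpha_i$, and then Theorem~\ref{theorem3} together with the convergence in Theorem~\ref{theorem1} gives $l_i^{best}=i$. Your first fix for the non-uniqueness of Huffman codes (take the anti-uniform Huffman code for each truncation, which the Linder et al.\ argument allows) makes explicit a point the paper passes over with ``for any $C_{n_m}$''; your second fix only works with the strict inequality you actually derived, not with \eqref{eq8} as stated, because rule \eqref{eq10} places a merged node ahead of nodes of equal weight.
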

	\begin{proof}Let $\mathcal{S}_{n}=(\mathcal{X}_{n},\mathcal{P}_{n})$ denote the truncated version of $\mathcal{S}$, where $\mathcal{X}_{n}=\{1,2,...,n\}$ and $\mathcal{P}_{n}$ satisfies \eqref{eq6}.
	From \eqref{eq6}, we have $p_{i}= q_{i}S_{n}$.
	
	By Theorem~\ref{theorem1}, $\{C_{n}\}_{n=1}^{\infty}$ contains a subsequence $C_{n_{m}}\rightarrow C_{best}$. Consider the corresponding sequence $\{\mathcal{P}_{n_{m}}\}_{m=1}^{\infty}$,
	we obtain

	\begin{small}
		\begin{equation*}
			\begin{aligned}
		& \quad (q_{i+2}+q_{i+3}\cdots+q_{n_{m}})\times S_{n_{m}}  \\
&= p_{i+2}+p_{i+3}\cdots+p_{n_{m}}\\
		& =\sum_{t=i+2}^{n_{m}}\big[\alpha_{t}\prod_{j=1}^{t-1}(1-\alpha_{j})\big]  \\
&=\sum_{t=1}^{n_{m}-i-1}\big[\alpha_{t+i+1}\prod_{j=1}^{t+i}(1-\alpha_{j})\big]\\
				&<\sum_{t=1}^{n_{m}-i-1}\big[\alpha_{t+i+1}\prod_{j=1}^{t+i}(1-\alpha_{j})\big]+\prod_{t=1}^{n_{m}}(1-\alpha_{t})\\
				&=\prod_{j=1}^{i+1}(1\!-\!\alpha_{j})\left[\prod_{t=i+2}^{n_{m}}\!(1\!-\!\alpha_{t})\!+\!\alpha_{n_m}\prod_{t=i+2}^{n_{m}-1}\!(1\!-\!\alpha_{t})\!+\!\cdots\!+\!\alpha_{i+2}\right]    \\
			&=\prod_{j=1}^{i+1}(1\!-\!\alpha_{j})\left[\prod_{t=i+2}^{n_{m}-1}\!(1\!-\!\alpha_{t})\!+\!\alpha_{n_{m}\!-\!1}\!\prod_{t=i+2}^{n_{m}-2}\!(1\!-\!\alpha_{t})\!+\!\cdots\!+\!\alpha_{i+2}\right]    \\
				&=\prod_{j=1}^{i+1}(1\!-\!\alpha_{j}) \overset{(a)}{<}\alpha_{i}\prod_{j=1}^{i-1}(1-\alpha_{j})\\
				&=p_{i}=q_{i}\times S_{n_{m}},
			\end{aligned}\label{eq43}
		\end{equation*}
	\end{small}
for all $1\le i\le n_{m}-3$, where $(a)$ is due to $(1-\alpha_{i})(1-\alpha_{i+1})<0.6\times0.6<0.4<\alpha_{i}$. Thus, we have $q_{i}>q_{i+2}+\cdots+q_{n_{m}}$ for all $1\le i\le n_{m}-3$.
	From Theorem~\ref{theorem3}, $\mathcal{P}_{n_{m}}$ is an anti-uniform source, then for any $C_{n_m},\, l_{k,n_{m}}=k$.
	From Definition~\ref{definition1} and $C_{n_{m}}\rightarrow C_{best}$ , we obtain $l_{k}^{best}=k$. \end{proof}
	\begin{remark}
	The only place where the range of values of $\alpha_{i}$ is needed in the proof is $(1-\alpha_{i})(1-\alpha_{i+1})\leq\alpha_{i}$. Therefore, the condition of Theorem~\ref{theorem6} can be relaxed to $\alpha_{i}\in[\frac{3-\sqrt{5}}{2},1)$.	
	\end{remark}
	
	Determining the optimal prefix code according to the above theorem requires less information than the condition (3) for an anti-uniform source. For instance, to verify whether the codeword length $l_{k}=k$, (3) requires information of $\{p_{i}\}_{i=1}^{k+1}$, whereas the present result requires only $\{p_{i}\}_{i=1}^{k}$.\par
	\section{Conclusion}	
	In this paper, we extend optimal prefix code theory to countably infinite alphabets. First, for any positive integer $k$, we establish precise probability intervals for the largest symbol probability $p_{1}$ that guarantee the corresponding codeword length $l^{best}_{1}=k$ in the optimal code. This result bridges the finite-alphabet Huffman construction and the infinite-alphabet case via a sequence convergence argument. Second, we propose a novel recursive structural criterion for distributions whose optimal code lengths satisfy $\{l_{i}=i\}_{i=1}^{\infty}$, which provides a more efficient method for verifying the anti-uniform source than the known sufficient condition.

\end{document}